\newcommand{\lap}{\Delta}
\newcommand{\R}{\mathbb R}
\newcommand{\de}{\mathop{}\!\mathrm{d}}  % d di differenziale, in tondo
\newcommand{\pa}{\mathop{}\!\partial}
    \pgfplotsset{compat=newest}
\newcommand{\ve}[1]{{\boldsymbol {#1}}}
\newcommand{\fcr}{\chi}
\newtheorem{conj}{Conjecture}
\newtheorem{remark}{Remark}
\newtheorem{theorem}{Theorem}
\newtheorem{lemma}{Lemma}
\numberwithin{equation}{section}
\begin{document}

\title
{Euclidean random matching in 2D for non-constant densities}

\author[D. Benedetto]{Dario Benedetto}
\address{Dario Benedetto \hfill\break \indent
        Dipartimento di Matematica, 
        Sapienza Universit\`a di Roma,
        \hfill\break \indent
        P.le Aldo Moro 5, 00185 Roma, Italy}
\email{benedetto@mat.uniroma1.it}
\author[E. Caglioti]{Emanuele Caglioti}
\address{Emanuele Caglioti \hfill\break \indent
        Dipartimento di Matematica, 
        Sapienza Universit\`a di Roma,
        \hfill\break \indent
        P.le Aldo Moro 5, 00185 Roma, Italy}
\email{caglioti@mat.uniroma1.it}

\begin{abstract}
  We consider the two-dimensional random matching problem in
  $\mathbb{R}^2.$ In a challenging paper, Caracciolo et al. \cite{CLPS},
  on the basis of a subtle  linearization of the Monge-Amp\`ere
  equation, conjectured that the expected value of the square of
  the Wasserstein
  distance, with exponent $2,$ between two samples of $N$ uniformly
  distributed points in the unit square is $\log N/2\pi N$ plus
  corrections, while the expected value of the square of the
  Wasserstein distance
  between one sample of $N$ uniformly distributed points and the uniform
  measure on the square is $\log N/4\pi N$.
  These conjectures have been proved by Ambrosio et al. \cite{AST}.
        
  Here we consider the case in which the points are sampled from a
  non-uniform density.
  For first we give formal arguments leading to the conjecture that if
  the density is regular and positive in a regular, bounded and
  connected domain
  $\Lambda$ in the plane, then the leading term of the expected values
  of the Wasserstein distances are exactly the same as in the
  case of uniform density, but for the
  multiplicative factor equal to the measure of $\Lambda$.

  We do not prove these results but, in the case in which the
  domain is a square, we prove estimates from above that
  coincides with the conjectured result.

\end{abstract}

\keywords{Euclidean matching, Optimal transport,
Monge-Amp\`ere equation, Empirical measures}
\subjclass{60D05, 82B44}

\maketitle

\section{Introduction}
\label{sez:intro}

Let $\mu$ be a probability distribution defined on the unit square
$Q=[0,1]^2.$ Let us consider two sets $\underline {x}^N
=\{\ve x_i\}_{i=1}^N$ and 
$\underline {y}^N = \{\ve y_i\}_{i=1}^N$
of $N$ points independently sampled from the distribution $\mu$.
The Euclidean Matching problem with exponent $2$ consists in finding
the matching $i\to \pi_i$, i.e. the permutation $\pi$
of $\{1,\dots N\}$ which minimizes
the sum of the squares of the distances between $\ve x_i$ and
$\ve y_{\pi_i}$,
that is
\begin{equation}
  \label{eq:cn}
  C_N(\underline {x}^N, \underline {y}^N)= \min_{\pi}\sum_{i=1}^N
  |\ve x_i-\ve y_{\pi_i}|^2.
\end{equation}
The cost defined above can be seen, but for a
constant factor $N$, as the square of the 2-Wasserstein distance
between two probability measures.
In fact, the $p-$Wasserstein distance $W_p(\nu_1,\nu_2)$,
with exponent $p\geq1$, between two probability
measures $\nu_1$ and  $\nu_2,$ is defined by 
\begin{equation*}
  W_p^p(\nu_1,\nu_2)=\inf_{J_{\nu_1,\nu_2}}
  \int J_{\nu_1, \nu_2}
  (\de \ve x,\de \ve y) |\ve x-\ve y|^p,
\end{equation*}
where the infimum is taken on all the joint probability
distributions $J_{\nu_1,\nu_1}(\de \ve x, \de \ve y)$
with marginals with respect to $\de \ve x$ and $\de \ve y$ given by
$\nu_1(\de \ve x)$ and $\nu_2(\de \ve y)$, respectively.  
Defining the empirical measures
$$
X^N(\de \ve x)=\frac{1}{N}\sum_{i=1}^N\delta_{\ve x_i}(\ve x)
\de \ve x,\phantom{aa}
Y^N(\de \ve x)=\frac{1}{N}\sum_{i=1}^N\delta_{\ve y_i}(\ve x)
\de \ve x,
$$
it is possible to show that
$$C_N(\underline {x}^N, \underline {y}^N) = N W^2_2(X^N,Y^N),$$
(see for instance \cite{B}).
In the sequel we will shorten
$C_N = C_N(\underline {x}^N, \underline {y}^N)$.

In the  challenging paper
\cite{CLPS},
at first for particles in the torus of measure one,
then also in the case of the square, see \cite{CS}, 
Caracciolo et al. 
conjectured that when  $\ve x_i$ and $\ve y_i$ are
sampled independently with  uniform density on $Q$,
then
\begin{equation}
  \label{eq:ecn}
  \mathbb {E}_\sigma[C_N] \sim \frac{\log N}{2\pi},
\end{equation}
where with $\mathbb {E}_\sigma$ we denoted the expected value with
respect to the uniform distribution $\sigma(\de \ve x) = \de \ve x$
of the points $\{\ve x_i\}$ and $\{\ve y_i\}$, 
and where we say that $f\sim g$
if $\lim_{N\rightarrow+\infty} f(N)/g(N) =1$.
In terms of $W_2^2$ the conjecture is equivalent to 
\begin{equation}
  \label{fact1}
  \mathbb{E}_\sigma[W_2^2(X^N,Y^N)] \sim \frac{\log N}{2\pi N}.
\end{equation}

Moreover, in \cite{CLPS} it is conjectured that asymptotic of the
expected value of
$W_2^2(X^N,\sigma)$ between the empirical density $X^N$ and the uniform
probability measure $\sigma(\de \ve x)$ on $Q$ is given by
\begin{equation}
  \label{fact2}
  \mathbb{E}_\sigma[W_2^2(X^N,\sigma)]
  \sim \frac{\log N}{4\pi N}.
\end{equation}
A first general results showing that in the case of the unit square
$\mathbb{E}_\sigma[W_2^2(X^N,Y^N)] $
behaves as $\frac{\log N}{N}$ has been
obtained in \cite{AKT}.
The conjectures above has been proved by Ambrosio et al. \cite{AST}. 
In \cite{AG} finer estimates are given and it is proved
that the result can be extended to the case when
the particles are sampled from the volume measure on a two-dimensional
Riemannian compact manifold. In \cite{AGT} it is shown that the
properties of the optimal transport map
for  $W_2(X^N,\sigma)$ are in
agreement with the result in \cite{CLPS}.

\vskip.3cm

We notice that if we consider square (or manifold) of measure $|Q|\neq 1$,
the cost has to be multiplied by $|Q|$.
Namely, if we extract $\{\ve x_i\}$
uniformly in $Q$, then the points $\{\gamma \ve x_i\}$, with $\gamma >0$,
are uniformly distributed in $\gamma Q$, and
$C_N(\underline {x}^N, \underline {y}^N) = \gamma^{-2}
C_N(\gamma \underline {x}^N, \gamma \underline {y}^N)$.
By imposing that $|\gamma Q| = 1$, i.e. 
$\gamma^{-2} = |Q|$,  we obtain that the expectation of the cost
$C_N(\gamma \underline {x}^N, \gamma \underline {y}^N)$
verifies the asymptotic
estimate \eqref{eq:ecn}.

\vskip.3cm In this paper we consider the case of non-uniform measure
$\mu(\de \ve x) = \rho(\ve x)\de \ve x$
with $\rho$ strictly positive and regular.

In particular in Section \ref{sez:due} we 
study the asymptotic behavior of the expected value of the cost 
when  $\rho$ is a density on $Q$, piecewise constant on a grid of sub-squares.
On the basis of the 
analysis of this case,  in Conjecture \ref{conj:1-2} we guess that,
 in the case of regular and strictly positive density, 
the asymptotic 
behavior is still described by the right-hand-sides of
eq.s \eqref{fact1} and
\eqref{fact2}.
In the case of a density defined on
a regular connected bounded set $\Lambda$ in the plane,
we expect that the asymptotic behavior changes
only for the multiplicative factor
$|\Lambda|$ (see Conjecture  \ref{conj:3-4}).

In Section \ref{sez:3nuova} we face the random Euclidean matching problem
with the strategy presented in \cite{CLPS,CS}, showing that the results
conjectured in \ref{sez:due}
can be formally justified on the basis of that approach.

We do not fully prove the conjectures, but in section \ref{sez:stima}
we prove that
\eqref{fact1} and \eqref{fact2} give exact estimates from above of the
cost, in the case of strictly positive and Lipschitz continuous
density on $Q$.

\vskip.3cm
  
Although this work concerns the two-dimensional case for 
cost and Wasserstein distance with exponent $2$, we briefly review here what
is known in the other cases, up to our knowledge. 

In dimension $2$, for $p\ge 1$, $p\neq 2$, in \cite{AKT}
it has been proved that the expected cost per particle $\mathbb{E}[C_N]/N$
is
$O(N^{-p/2})$ as $N\rightarrow\infty.$
The value of the limit as $N\to +\infty$ of 
$N^{p/2}\mathbb{E}[C_N]/N$ is not known. 

In dimension $1$ the random Euclidean matching problem in a segment is
almost completely characterized, for any $p\geq 1$. This is due to the
fact that the best matching between two set of points on a line is 
monotone.
When the density is uniform on the segment $[0,1]$ and $p=2$, one gets
$\mathbb{E}[C_N]\to 1/3$ as $N\to\infty$. 
In this case, it is also possible  to compute explicitly
$\mathbb{E}[C_N]$ for any $N$, in fact
$\mathbb{E}[C_N]={N}/{3 (N +1)}$.
Moreover, for  any $p\geq 1$,
$\mathbb{E}[C_N]/N\sim cN^{-p/2}$,
and a general expression for $c$  has been determined in \cite{CS2014}.
The different behavior in the case in which the density vanishes in
some point or in a segment has been analyzed in \cite{CDS}
(see the Remark 3 at the end of Section 2).
A general discussion on the one-dimensional, also for the case of
non-constant densities
and of densities defined on all the line, can be found in 
\cite{BL}. 

In dimension $d\geq 3$, it has been proved that
$\mathbb{E}[C_N]/N\sim cN^{-p/d}$, for any $p\geq 1$ (see \cite{Ta92},
\cite{DY} for $p=1,$ and \cite{Le} for $p\geq 1$).

\section{Some conjectures for non-constant densities}
\label{sez:due}

Let us consider the case 
$\mu(\de \ve x)=\rho(\ve x)\,\de \ve x$ with 
$\rho(\ve x)$ is piecewise constant with respect to a
regular grid of sub-squares of $Q$.
For sake of simplicity we consider the case in which the grid is made
by four sub-squares: $[0,1/2)^2$, 
$[0,1/2)\times[1/2,1],$ $[1/2,1]\times[0,1/2),$  $[1/2,1]^2$
(see fig. \ref{fig:4-quadrati}).

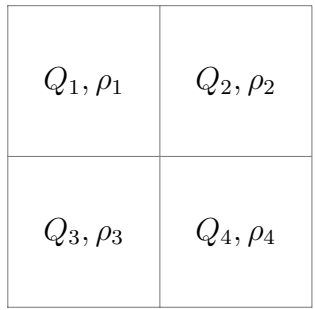
\begin{figure}[hbt!]
\centering
\begin{tikzpicture}
\draw[step=2.cm,color=gray] (-2.01,-2.01) grid (2,2);
\node at (-1.,+1) {$Q_1,\rho_1$};

\node at (-1,-1) {$Q_3,\rho_3$};
\node at (1,1) {$Q_2,\rho_2$};
\node at (+1,-1) {$Q_4,\rho_4$};

\end{tikzpicture}
\caption{Grid of $2\times2$ squares.}
\label{fig:4-quadrati}
\end{figure}
Let us denote by $Q_k:i=1,\dots 4,$ the four squares and by
$\rho_k > 0$, $k=1,\dots 4$ the corresponding constant densities.
Now, let $\{\ve x_i\}_{i=1}^N$ and $\{\ve y_i\}_{i=1}^N$
be two samples of $N$ independent
points from the distribution $\mu$, and let us denote
with $R_k$ and $S_k$ the number of points $\ve x_i$ and $\ve y_i$
in $Q_k$,
respectively.  Then, both $R_k$ and $S_k$ will be equal to
$N_k=\rho_kN/4$ plus terms of the order of $\sqrt{N}$.

Now we make two ansatzes.
\begin{enumerate}
\item Up to a correction $o(\log N)$,
  we can calculate $\mathbb E_\mu[C_N]$ by 
  restricting ourselves to the case in which
  both $R_k$ and $S_k$ are
  equal to $N_k=\rho_kN/4$
  (rounded to integer numbers  in such a way that the sum of the $N_k$ is $N$).
  
\item Given the samples with $R_k=S_k=N_k$,
  the optimal cost,
  with the constraint that $\ve x_i$ and $\ve y_{\pi_i}$
  are in the same square, is $C_N$ plus an error $o(\log N)$.
\end{enumerate}
Under these assumptions we get that, but for terms of order $1$, the
expected value of the cost of the optimal matching will be given by
the sum of the expected value of the cost of the optimal couplings in
the four squares.

Now let us notice that, by eq. \eqref{fact1}, if we sample $N_k$ particles
uniformly and independently in a square of size
$|Q_k|$, then the expected value of the cost is simply given
by $|Q_k|\frac{\log N}{2\pi}$, as follows
by the scaling argument shown in the previous section.

Therefore,
\begin{equation*}
  \begin{split}
    \mathbb{E}_\mu[C_N]
    &=\sum_{k=1}^4 |Q_k|
    \frac{\log (\rho_k  N/4)}{2\pi}+o(\log N)\\
    &=\sum_{k=1}^4 |Q_k|
    \frac{\log N}{2\pi}+\sum_{k=1}^4 |Q_k|
    \frac{\log (\rho_k/4)}{2\pi}+o(\log N)\\
    &=
    \frac{\log N}{2\pi}\left(\sum_{k=1}^4 |Q_k|\right)+o(\log N)=
    \frac{\log N}{2\pi}+o(\log N),
  \end{split}
\end{equation*}
where we used that $\sum |Q_k| =1$.
We can notice that the dependence of $\mathbb{E}_\mu[C_N]$ on the values
of the densities $\rho_k$ does not affect the leading term, that
only depends on the measure of the set.

This analysis can be extended when we consider a regular grid of $m^2$
squares.  Therefore, by noticing that it is possible to approximate a
continuous density $\rho$ as well as we want in $L_{\infty}$ with a
piecewise constant density, we are led to the following conjectures.

\begin{conj}
  \label{conj:1-2}
  Let  $\mu(\de \ve x)  = \rho(\ve x)\de \ve x$
  a probability measure defined on $Q$ where $\rho$ is a smooth
  positive density on $Q.$
   Let  $\{\ve x_i\}_{i=1}^N$
  and $\{\ve y_i\}_{i=1}^N$ be
  two samples of points independently distributed
  with $\mu$.
  Then 
  \begin{equation}
    \label{eq:conj11}
    \mathbb{E}_\mu[C_N]  \sim\frac{\log N}{2\pi}.
  \end{equation}
Reasoning in the same way,
we can conjecture that the asymptotic behavior of the $2-$Wasserstein distance
between the empirical measure
$X^N$  and the measure $\mu$ itself verifies
\begin{equation}
  \label{eq:conj12}
    \mathbb{E}_\mu[W_2^2(X^N,\mu)] \sim
    \frac{\log N}{4 \pi N}.
  \end{equation}
\end{conj}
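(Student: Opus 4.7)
The plan is to make the heuristic of Section \ref{sez:due} rigorous by combining a dyadic discretization of the density with the AST asymptotic \eqref{fact1} applied cell by cell. I describe the argument for \eqref{eq:conj11}; \eqref{eq:conj12} is obtained by the same scheme, replacing \eqref{fact1} by \eqref{fact2}.

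\textbf{Setup.} I fix a scale $m=m(N)$ with $m\to\infty$ and $N/m^2\to\infty$, partition $Q$ into $m^2$ sub-squares $Q_k$ of side $1/m$, and replace $\rho$ by its cell-wise mean $\rho_m$. The uniform continuity of $\rho$ on $Q$ yields $\|\rho-\rho_m\|_\infty\to 0$, and a partition-respecting coupling of $\mu$ with $\mu_m(\de \ve x)=\rho_m(\ve x)\de \ve x$ reduces the problem to the piecewise-constant case. Let $R_k$, $S_k$ denote the numbers of $\ve x$- and $\ve y$-particles falling in $Q_k$; conditionally on $(R_k,S_k)_k$, the particles inside $Q_k$ are i.i.d.\ uniform on $Q_k$, and multinomial concentration gives $|R_k-N_k|+|S_k-N_k|=O(\sqrt{N_k})$ with overwhelming probability, where $N_k:=\rho_m|_{Q_k}\,|Q_k|\,N\asymp N/m^2$.

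\textbf{Upper bound.} I would use the two-stage sub-optimal plan suggested by the ansatz. \emph{Inside each cell:} match $\min(R_k,S_k)$ pairs by the cell-restricted optimum. The scaling argument recalled in Section \ref{sez:intro} combined with \eqref{fact1} gives an expected contribution from $Q_k$ of $|Q_k|\log N_k/(2\pi)\,(1+o(1))$, and summation with $\log N_k=\log N-2\log m+O(1)$ yields $\log N/(2\pi)+O(\log m)$. \emph{Across cells:} transship the residuals $R_k-S_k$ along the cell-adjacency graph. A discrete $H^{-1}$-type estimate for the signed charge $\{R_k-S_k\}_k$ on the $m\times m$ grid — analogous to the AKT bound but at the coarse scale $1/m$ — shows that the expected squared cost of this routing is of lower order. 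Choosing $m=m(N)$ with $\log m=o(\log N)$, for instance $m=\sqrt{\log N}$, then yields $\mathbb{E}_\mu[C_N]\le \log N/(2\pi)+o(\log N)$.

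\textbf{Lower bound and main obstacle.} The matching lower bound is the hard part. Following the strategy of \cite{AST,AG}, I would linearize the Monge--Amp\`ere equation around $\mu$, introduce the weighted Poisson problem $-\dive(\rho\nabla u)=X^N-Y^N$ with homogeneous Neumann conditions, and evaluate $\mathbb{E}\int_Q\rho|\nabla u|^2$ by expanding in eigenfunctions of the weighted Laplacian $-\rho^{-1}\dive(\rho\nabla\cdot)$. Weyl's law for this operator gives the counting asymptotic $\#\{\lambda_k\le\Lambda\}\sim|Q|\Lambda/(4\pi)$, independent of the smooth, strictly positive weight $\rho$; this is the precise cancellation that leaves $|Q|/(2\pi)=1/(2\pi)$ as the sole surviving coefficient. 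Turning this formal computation into a rigorous two-sided match requires extending the dyadic regularization and stability estimates of \cite{AST,AGT} to the weighted setting, which, together with the sharp planar transshipment bound for the cell residuals mentioned above, is the principal technical obstacle.
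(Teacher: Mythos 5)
The statement you are addressing is a conjecture: the paper does not prove it, and neither do you, so your proposal should be read as (i) a heuristic identification of the constant plus (ii) a rigorous one-sided bound — which is exactly the status of the paper's own treatment (Sections 2--3 give the heuristic, Theorem \ref{teo:stima} the upper bound). Within that frame your heuristic is correct and essentially equivalent to the paper's: your observation that Weyl's law for $-\rho^{-1}\dive(\rho\nabla\,\cdot)$ is independent of the weight is the spectral-side version of the paper's key remark \eqref{idea} that the Green function of $\nabla\cdot(\rho\nabla\,\cdot)$ has singular part $-\tfrac{1}{2\pi\rho(\ve z)}\log|\ve x-\ve z|$, so that the factor $\rho(\ve z)$ cancels upon integration against $\rho(\ve z)\de\ve z$; the two computations express the same cancellation in dual bases, and both remain formal because the linearization of the Monge--Amp\`ere equation is not justified.

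Two concrete gaps remain. First, in your upper bound for the bipartite cost you match $\min(R_k,S_k)$ pairs per cell and then invoke an unproved ``discrete $H^{-1}$-type estimate'' to route the residuals $R_k-S_k$; this is a genuinely missing step (the residual field has total mass of order $m\sqrt N$ on an $m\times m$ grid, and you must show its transport cost is $o(\log N)$ for your choice of $m$). The paper sidesteps it entirely: it proves the semi-discrete bound \eqref{estimate2} first — using an explicit Knothe-type map $\ve G$ on each cell with Jacobian $r$ and bi-Lipschitz distortion $1+c\ell$ (Lemma \ref{lemma:ell}), plus the Benamou--Brenier comparison $W_2^2\le c\|\cdot\|_2^2$ (Lemma \ref{lemma:wl2}) for the coarse-scale error $W_2^2(\mu^m,\mu)$ — and then deduces the bipartite bound \eqref{estimate1} from $\mathbb E_\mu[W_2^2(X^N,Y^N)]\le 2\,\mathbb E_\mu[W_2^2(X^N,\mu)]$ via the independence of the optimal maps $\ve T_{\scriptscriptstyle{X^N}}$ and $\ve T_{\scriptscriptstyle{Y^N}}$ (the argument of \cite[Prop.~2.1]{AST}). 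You should adopt that route: it eliminates the transshipment step and supplies the quantitative content behind your phrase ``partition-respecting coupling''. Second, the lower bound is, as you acknowledge, not proved; it is precisely the open part of the conjecture, and the paper leaves it open as well.
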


Let us notice that the two ansatzes above are far from been
obvious.  Nevertheless, in the next section we will prove 
that the right-hand-sides of eq.s \eqref{eq:conj11} and \eqref{eq:conj12}
give exact estimates from
above of the expected values.

\vskip.3cm
Let us now consider a bounded connected set $\Lambda$ in
$\mathbb{R}^2$ with regular boundary, and consider a partition of
$\Lambda$ with squares of sides $1/m$, as in fig. \ref{fig:lambda}.
Let us suppose that the probability measure $\mu$ has a 
smooth and positive density in $\Lambda$,
and define $\Lambda_k = Q_K \cap \Lambda$.

\begin{figure}[hbt!]
\centering
\begin{tikzpicture}
 \coordinate (N) at (8.5,20);
 \coordinate (O) at (5.9,11);
 \coordinate (P) at (6.6,10.8);
 \coordinate (Q) at (7.6,12);
 \coordinate (R) at (8.4,12.6);
 \coordinate (S) at (8.1,14);
 \coordinate (T) at (7.4,13.8);
 \coordinate [label={[xshift=-0.4cm, yshift=0.4cm]$\Lambda$}](U) at (6.4,13.2);
 \draw[pattern=grid] (O) to [ closed, curve through = {(P)  (Q)  (R) (S)  (T) (U)}] (O);
\end{tikzpicture}
\caption{Set $\Lambda$ covered with squares.}
\label{fig:lambda}
\end{figure}
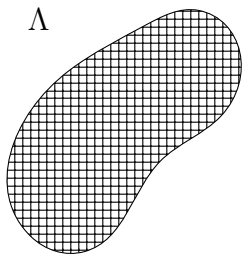

Then with the same reasoning made for the case
of the square $Q$, formally we get
\begin{equation*}
  \begin{split}
    \mathbb{E}_\mu[C_N]&\sim
    \sum_{k:Q_k \subset \Lambda}
    |Q_k|\frac{\log (\rho_k  N/|Q_k|)}{2\pi}\\
    &=\frac{\log N}{2\pi}\sum_{k:Q_k \subset \Lambda}
    |Q_k|+O(1)\sim
    |\Lambda|\frac{\log N}{2\pi},
  \end{split}
\end{equation*}
where $\rho_k$ is the average of $\rho$ on $\Lambda_k$.  In fact,
we expect that
any of the square $Q_k$ in $\Lambda$ contributes to $\mathbb{E}[C_N]$
with a term $ \sim\frac{|Q_k|}{2\pi}\log N $.
We have also
neglected the contribution of the squares close to the boundary.

Therefore, we are led to the following conjecture.
\begin{conj}
  \label{conj:3-4}
  Let  $\mu(\de \ve x)  = \rho(\ve x)\de \ve x$, a probability
  measure defined on $\Lambda$ where $\rho$ is a smooth positive density.
  Let $\{\ve x_i\}_{i=1}^N$ and $\{\ve y_i\}_{i=1}^N$
  two samples of $N$ points independently distributed with $\mu$.
  \begin{equation*}
    \mathbb{E_\mu}[C_N]
    \sim |\Lambda|\frac{\log N}{2\pi}\ \ \text{ and } \ \  
    \mathbb{E_\mu}[W_2^2(X^N,\mu)] \sim |\Lambda|\frac{\log N}{4 \pi N}.
  \end{equation*}
\end{conj}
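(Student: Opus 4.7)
The plan is to adapt the partition strategy that was used to motivate Conjecture \ref{conj:1-2}, dealing with the additional difficulty that $\Lambda$ has a curvilinear boundary. I would fix a scale $m$, to be chosen later as a function of $N$, and partition $\R^2$ into a grid of axis-aligned squares of side $1/m$. Splitting the indices into interior squares $\mathcal{I}_m = \{k : Q_k \subset \Lambda\}$ and boundary squares $\mathcal{B}_m = \{k : Q_k \cap \Lambda \neq \emptyset \text{ and } Q_k \not\subset \Lambda\}$, and writing $\Lambda_k = Q_k \cap \Lambda$ with $\rho_k$ the average of $\rho$ on $\Lambda_k$, the regularity of $\partial \Lambda$ ensures $\sum_{k \in \mathcal{B}_m} |\Lambda_k| = O(1/m)$.

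On each interior square $Q_k$ I would invoke the same two ansatzes as in Section \ref{sez:due}: (i) restrict to the event that the empirical counts $R_k$ and $S_k$ coincide with their common mean $N_k = N \rho_k |Q_k|$, at a cost of $o(\log N)$ for the $O(\sqrt{N})$ fluctuations; (ii) force the matching to pair points within the same cell, at a cost of another $o(\log N)$. Since $\rho$ is smooth and strictly positive, replacing $\rho$ by the constant $\rho_k$ on $Q_k$ adds a lower-order term. The uniform result from \cite{AST}, together with the scaling recalled at the end of Section \ref{sez:intro}, then yields an expected internal cost of $|Q_k| \log N_k/(2\pi) + o(\log N) = |Q_k| \log N/(2\pi) + O(|Q_k| \log m)$. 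Summing over $\mathcal{I}_m$ produces
$$
\Big(\sum_{k \in \mathcal{I}_m} |Q_k|\Big)\frac{\log N}{2\pi} + O(\log m),
$$
which converges to $|\Lambda|\log N/(2\pi)$ as $m \to \infty$.

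For the boundary cells I would use a less precise estimate: matching within each curvilinear piece $\Lambda_k$, $k\in\mathcal{B}_m$, contributes at most $O(|\Lambda_k|\log N)$ by the general two-dimensional bound of \cite{AKT}, so that the total boundary contribution is $O(\log N/m)$. Choosing $m = m(N) \to \infty$ sufficiently slowly, for instance $m = \log\log N$, makes this term $o(\log N)$ while keeping the interior approximation valid. The corresponding statement for $\mathbb{E}_\mu[W_2^2(X^N,\mu)]$ follows by the same scheme, using \eqref{fact2} in place of \eqref{fact1} on each cell.

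The main obstacle is, as the authors emphasize, the rigorous justification of the two ansatzes: one must show that the optimal matching and the matching constrained to pair points inside the same cell differ only by $o(\log N)$, in spite of the $O(\sqrt{N})$ fluctuations of the counts and of the irregular cell shapes along $\partial\Lambda$. A natural route, following \cite{AST,AGT}, is to build an approximate transport map cell by cell from solutions of Poisson problems and to use it as a competitor in a Kantorovich-duality argument, controlling the residual coming from mismatches at the cell interfaces and along $\partial\Lambda$. The boundary cells are the delicate point, since the Poisson problems there live on low-regularity domains of diameter $O(1/m)$ and require a careful choice of mixed boundary conditions on $\partial Q_k \cap \Lambda$ and on $\partial \Lambda \cap Q_k$ so as not to spoil the small-area gain of order $1/m$.
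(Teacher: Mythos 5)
Your proposal follows essentially the same route as the paper's own (explicitly non-rigorous) justification of Conjecture \ref{conj:3-4}: partition $\Lambda$ into squares of side $1/m$, apply the uniform-density asymptotics with the scaling factor $|Q_k|$ on each interior cell, and discard the boundary cells whose total area is $O(1/m)$. Like the paper, you leave the two ansatzes (concentration of the cell counts and near-optimality of the cell-constrained matching) unproven, which is consistent with the statement being only a conjecture; your extra control of the boundary cells via an AKT-type bound and the choice $m=m(N)\to\infty$ slowly is a sensible refinement of the paper's remark that the boundary contribution is neglected, but it does not change the approach.
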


\begin{remark}
  If the measure of the support of $\mu$
  is infinite (for instance if the support is all $\mathbb{R}^2$), 
  we expect that 
  \begin{equation*}
    \lim_{N\rightarrow\infty}
    \frac{\mathbb{E}_\mu[C_N]}{\log N} =
    +\infty.
  \end{equation*}
  This is in agreement with the fact, proved by Talagrand in
  \cite{Ta}, that when the density is the Gaussian,
  i.e. $\rho=\frac{1}{2\pi}e^{-|x|^2/2},$ the average of the cost
  satisfies for large $N$
  $$ (\log N)^2 \leq \mathbb{E}_\mu[C_N]\leq C  (\log N)^2.$$
  Notice that an estimate from above proportional to $(\log N)^2$ was
  previously proved by Ledoux in \cite{Le}.
  Moreover, in \cite{Ta} the
  author says that a similar estimate can be obtained for densities
  $\rho \propto e^{-|x|^{\alpha}}$ obtaining a bound form below for
  the cost proportional to $(\log N)^{1+2/\alpha},$ and therefore much
  larger than $\log N$.
\end{remark}

\begin{remark}
  \label{rem:delta}
  In the above conjectures we require that $\rho$ is positive, but we can
  reformulate the conjectures using the measure of the
  support of $\rho$ instead of 
  the measure of $\Lambda$.
    The condition which really can change the asymptotic
  behavior of the cost is the connection of the support of $\rho$.
  Namely, if this condition is not satisfied, the result
  may be false. In particular if $\rho$ is constant in two squares
  whose distance is positive,
  we get that the expected value of cost is $O(\sqrt N) \gg O(\log N)$. To get  an idea of what happens, consider
  $$\mu(\de \ve x)
  =\frac 12 \left(
    \delta_{\ve z_1}(\ve x) + \delta_{\ve z_2}(\ve x)\right)
  \de \ve x.$$
  Then
  $$
  \begin{aligned}
    &X^N(\de \ve x) = \left( \frac RN \delta_{\ve z_1} (\ve x) + 
      \frac {N-R}N \delta_{\ve z_2} (\ve x) \right)\de \ve x,\\
    &Y^N(\de \ve x) = \left( \frac SN \delta_{\ve z_1} (\ve x) + 
      \frac {N-S}N \delta_{\ve z_2} (\ve x) \right)\de \ve x,\\
  \end{aligned}
  $$
  where $R$ and $S$ are independent binomial variables of mean $N/2$ and
  variance $N/4$.
  It is easy to show that
  $$C_N  = L^2\left|R - S\right|$$
  where  $L=|\ve z_1 - \ve z_2|$.
  Then, by noticing that $R-S$ has variance $N/2$,  by
  the  Central Limit Theorem
  we get that the leading term of the expected value of the cost
  is $L^2\sqrt{N/\pi}$.
  This behavior is independent of the dimension.
  The reader can find in the paper \cite{CDS}
  the exact asymptotic value in the one-dimensional case 
  of two disjoint intervals of the same length and
  with constant density. 
\end{remark}

  \begin{remark}
    
    Non-constant densities
    have been previously addressed
    in \cite{CS}, in which the authors present a general expression 
    which also allows the explicit calculation
    of the asymptotic value of the cost
    in the one dimensional case.
    Again on the one-dimensional
    case, 
    in \cite{CDS} the authors consider also a matching problem
    as in eq. \eqref{eq:cn}
    but where the distance appears with the power $p\ge 1$,
    not necessarily $2$.
    The expected value of the cost per particle
    $\mathbb{E}[C_N]/N$
    goes as  $c\,N^{-p/2}$, 
    but interestingly
    $c$ can diverge if the density  approaches zero at some point.
\end{remark}

\section{A formal proof}
\label{sez:3nuova}

It is possible to extend the method by Caracciolo et al. \cite{CLPS}
to a generic (positive) density. In particular in \cite{CS} a formula
for $\mathbb{E}_\mu[C_N]$
and for its fluctuations is presented, in the general case. The
formula for $\mathbb{E}_\mu[C_N]$ 
is 
computed 
in the case of the uniform
density $\sigma$ in the square, recovering the results in \cite{CLPS}.
Here we follow the approach presented in the papers above, considering
the general case of a smooth and positive density and deriving
formally eq. \eqref{eq:conj11} of Conjecture \ref{conj:1-2}
(eq. \eqref{eq:conj12} can be derived essentially in the same way).

In the framework of this approach, the
main argument we use to derive eq. \eqref{eq:conj11} consists in noticing
that the singular part of the Green function of the linearized
Monge-Amp\`ere equation, that in the case of a generic density is an
elliptic operator in divergence form, has a very simple expression.

\subsection{Constant density}
\label{sub:costante}

The strategy proposed in \cite{CLPS} to compute the expected value of
$C_N$ consists in linearizing the Monge-Amp\`ere equation
(which is the Euler Lagrange equation for the Monge-Kantorovich
problem) and then to put a suitable cut-off on the expression
founded. For first, we here report the argument in \cite{CLPS} for 
the case of constant density, and we
refer to \cite{CLPS,CS} for the justification of the approach
and further details.

By linearizing the Monge-Amp\`ere equation around the
uniform probability measure $\sigma(\de \ve x) = \de \ve x$,
the Wasserstein distance
between two regular measures is approximated by
\begin{equation}
  \label{eq:linear}
  \int_Q  |\nabla \psi |^2 = -\int_Q \psi \, \lap \psi 
\end{equation}
where $\psi$ solves
\begin{equation}
  \label{eq:psi} 
  \lap \psi =-\delta\rho,
\end{equation}
whit Neumann boundary conditions,
and where $\delta\rho$ is the difference of the densities
of the two measures.
We use formally \eqref{eq:linear} in the case of singular measures,
introducing later a suitable cut-off that make finite
the cost.
In the bipartite case
\begin{equation}
  \label{eq:dro}
  \delta\rho(\ve x) \de \ve x = X^N(\de \ve x) - Y^N(\de \ve x)
\end{equation}
and the cost is $N$ times the Wasserstein distance, that is
\begin{equation*}
  C_N\sim N\int_Q
 |\nabla \psi |^2=N \int_Q
 \psi  \,\delta\rho. 
\end{equation*}
It is convenient to introduce 
the Green function  $\phi_{\ve z}$ for the
Laplace problem on $Q$, 
which is the solution, with zero average,  of
$$\lap \phi_{\ve z}(\ve x) = - \delta_{\ve z}(\ve x) + 1,$$
with Neumann boundary conditions.
Since $\psi$ solves eq. \eqref{eq:psi} with
$\delta \rho$ given in eq. \eqref{eq:dro}, 
from the  definition of $\phi_{\ve z}(\ve x)$ we get
$$\psi(\ve x) = \lap^{-1} \delta \rho (\ve x) = \frac 1N
\left( \sum_{i=1}^N \phi_{\ve x_i}(\ve x)  -
  \sum_{j=1}^N \phi_{\ve y_j}(\ve x) \right)$$
and then 
\begin{equation*}
  C_N \sim \frac{1}{N}\int_Q
  \left(\sum_{i=1}^N\phi_{\ve x_i}
    -\sum_{j=1}^N
    \phi_{\ve x_j}\right)
  \left(\sum_{i=1}^N\delta_{\ve x_i}-
    \sum_{j=1}^N\delta_{\ve y_j}\right).
\end{equation*}
Taking the expectation in the location of the delta functions, and
using that the Green function has zero average, 
we get
\begin{equation*}
\begin{split}
  \mathbb{E}_\sigma[C_N]
  &\sim \frac 1N \, \mathbb{E}_\sigma\int_Q \left[\left(\sum_{i=1}^N
      \phi_{\ve x_i}-
      \sum_{j=1}^N\phi_{\ve y_j}\right)
    \left(\sum_{i=1}^N\delta_{\ve x_i}- \sum_{j=1}^N
      \delta_{\ve y_j}\right)\right]\\
  &= \frac{1}{N}\mathbb E_\sigma
  \sum_{i=1}^N\int_Q
  \left(\phi_{\ve x_i}
    \delta_{\ve x_i}+
    \phi_{\ve y_i}\delta_{\ve y_i}\right)=2
  \int_Q \de \ve z \int_Q
  \de \ve x \, |\nabla\phi_{\ve z}(\ve x)|^2 \\
  &=
  2 \int_Q
  \de \ve x \, |\nabla\phi_{\ve 0}(\ve x)|^2,
\end{split}
\end{equation*}
(the integral in $\ve x$ does not depend on the position of
$\ve z$, then we can fix it in  $\ve z = \ve 0$).
By Parseval's Lemma,  the right-hand-side can
be written in Fourier series, with respect to the base of cosines,  as
$$
\frac{2}{\pi^2} 
\sum_{\ve k\in{\mathbb N}^2\backslash
  \{{\ve 0\}}}
\frac{1}{|\ve k|^2}.$$
This series is not summable but a natural cut-off
can be imposed by summing up to $\ve k$ as large as
$\frac{1}{\lambda}$ where
$\lambda=\frac{1}{\sqrt{N}}$ is the characteristic length of the
system, i.e. the typical distance between a point $x$ and its closest
point $y$.
In this way one gets $\mathbb E_\sigma [C_N]
\sim \frac{1}{2\pi} \log N + O(1).$
It is important to notice that if the cut-off is chosen to be
$\lambda = \alpha/\sqrt{N}$ then the leading term of does not
depend on the constant $\alpha$, which  only affects the $O(1)$ term.

\vskip.3cm In order to face the case of a non-constant density, it is
convenient to make the previous computation in the position space, in which
the cut-off can be obtained by smoothing the delta-function evolving
it, with the heat semigroup, for a time $t=1/N$.  
We recall that the Green function can be written as
$$\phi_{\ve z}(\ve x) =-\frac{1}{2\pi}\log | \ve x - \ve z |
+ \gamma(\ve x,\ve z),$$ where
$\gamma$ is a regular function. We indicate whit
$f^t$ the evolution of a function $f$ with the
heat semigroup until the time $t$,  and with $G_t(\ve x)$
the heat kernel in the whole space $\R^2$.
We get again
\begin{equation}
  \label{eq:cg}
  \begin{aligned}
    \mathbb E_\sigma[C_N]&\sim  2\int_Q
    \phi_{\ve 0}^{t}(\ve x)\,
    \delta^{t}_{\ve 0}(\ve x) \de \ve x =
    2\int_Q \phi_{\ve 0}(\ve x) \,
    \delta^{2t}_{\ve 0}(\ve x) \de \ve x \\
    & =  -2\frac 1{2\pi}
    \int_\R \log |\ve x| \, G_{2t}(\ve x) 
    \de x + O(1)
    = -\frac 1{\pi} \log \sqrt{t} + O(1) \\
    &= \frac 1{2\pi} \log N + O(1).
  \end{aligned}
\end{equation}

\subsection{Non-constant density}
\label{sub:noncostante}

Now let us consider the case of a probability measure $\mu$ of 
positive and regular density
$\rho$. The main difference from the case of a constant density is
that the linearized Monge-Amp\`ere equation reads as
\begin{equation}
  \label{divergenceformeq}
  \nabla\cdot(\rho\nabla \psi)=-\delta\rho.
\end{equation}
(see for instance
\cite{Sant} and references therein).
Also in this case
$$C_N\sim N \int_Q
\rho \, | \nabla\psi |^2= N\int_Q
\psi \delta \rho
$$
where $\psi$ satisfies \eqref{divergenceformeq}.
We then introduce the Green function $\phi_{\ve z}(\ve x)$
which is the solution of  
\begin{equation}
  \label{div2}
  \nabla\cdot(\rho\nabla \phi_{\ve z})
  =-(\delta_{\ve z}-\rho)\ \ with \ \ \int_Q
  \phi_{\ve z}(\ve x) \rho(\ve x) \de \ve x = 0,
\end{equation}
getting 
$$\psi=\frac{1}{N} \sum_{i=1}^N\phi_{\ve x_i}-
\frac 1N\sum_{j=1}^N\phi_{\ve y_j}$$
and
\begin{equation*}
  C_N \sim \frac{1}{N}\int_Q
  \left(\sum_{i=1}^N\phi_{\ve x_i}-
    \sum_{j=1}^N\phi_{\ve y_j}\right) \left(
    \sum_{i=1}^N\delta_{\ve x_i}-
    \sum_{j=1}^N\delta_{\ve y_j}\right).
\end{equation*}
Taking the expectation in the location of the delta functions, that
are distributed with density $\rho$, we get
\begin{equation*}
  \begin{split}
    \mathbb{E}_\mu[C_N]
    &\sim \frac 1N 
    \mathbb{E}_\mu\int_Q
    \left(\sum_{i=1}^N\phi_{\ve x_i}-
      \sum_{j=1}^N\phi_{\ve y_j}\right) \left(\sum_{i=1}^N
      \delta_{\ve x_i}-
      \sum_{j=1}^N\delta_{\ve y_j}\right)\\
    &=\frac 1N
    \mathbb{E}_\mu\int_Q
    \sum_{i=1}^N\left(\phi_{\ve x_i}
      \delta_{\ve x_i}+\phi_{\ve y_i}
      \delta_{\ve y_i}\right) =
    2\int_Q\de \ve z \,\rho(\ve z)
    \int_Q\de \ve x\,
    \phi_{\ve z}(\ve x)
    \delta_{\ve z}
    (\ve x).
  \end{split}
\end{equation*}
The key observation we make here consists in noticing that in the
equation \eqref{div2}, that we rewrite as
$$\rho \lap \phi_{\ve z}  + \nabla\rho \cdot \nabla\phi_{\ve z}
= -\delta_{\ve z} + \rho,$$
the term $\nabla\rho \cdot \nabla\phi_{\ve z}$
is less  singular than the $\delta$  function, therefore
\begin{equation}
  \label{idea}
  \phi_{\ve z} (\ve x) =
  -\frac{1}{2\pi\rho(\ve z) }\log|\ve x-\ve z| + O(1)
\end{equation}
as $\ve x\rightarrow \ve z$
(see the Remark \ref{rem:3} at the end of this section).
Finally, we apply the cut-off by evolving 
$\delta_{\ve z}$ until
the time $t=1/N$ with the heat semigroup. Proceeding as
in eq. \eqref{eq:cg} 
$$
\begin{aligned}
\mathbb E_\mu [C_N ]&\sim  -\frac 1\pi\int_Q
\de \ve z \, \rho(\ve z)\, \int_Q\de \ve x
\frac 1{\rho(\ve z)} \log |\ve x - \ve z| G_{2t}(\ve x -\ve z) + O(1)\\
&=\frac{1}{2\pi}\log N\left(\int_Q \de \ve z
\right) +O(1) =\frac{1}{2\pi}\log N + O(1),
\end{aligned}
$$
that is in agreement with our conjecture.

{The argument can be generalized to any regular bounded domain $\Lambda$
in the plane and to the case of the torus. In the latter case, the 
operator $\lap$ requires periodic boundary conditions.

Indeed, changing the domain or the boundary condition only affects the regular part of the Green function in \eqref{idea}.}

\begin{remark}
  \label{rem:3}
  Denoting
  with $\Delta^{-1}$ the inverse of the Laplacian,
  we have
  $$\phi_{\ve z} = -\Delta^{-1} \left(
    \frac{\delta_{\ve z}-\rho}
  {\rho}\right) -\Delta^{-1}\frac{\nabla\rho \cdot
    \nabla\phi_{\ve z}}{\rho}.$$
  This expression suggests that divergent part of $\phi_{\ve z}$
  is
  $$-\frac 1{2\pi\rho(\ve z)} \log |\ve x - \ve z|,$$
  and then that 
  $|\nabla \rho \cdot \nabla \phi_{\ve z} / \rho|$
  is bounded by  $\frac{c}{|\ve x-\ve z|}$.
  It is easy to show that applying $\lap^{-1}$ to this term we obtain
  a bounded continuous function. A rigorous proof of \eqref{idea} when the domain is all $\mathbb{R}^2$ can
  be found, for instance, in \cite{BaOr}, and can be extended to
  the case of the square with minor modifications.
\end{remark}

\section{Estimate from above}
\label{sez:stima}

In this section we prove that 
\begin{theorem}
  \label{teo:stima}
  Let  $\mu(\de \ve x)  = \rho(\ve x)\de \ve x$ be  a probability measure
  defined on $Q$, where $\rho$ is a Lipschitz continuous
  strictly positive density.
  \begin{enumerate}
  \item
    Let $\{\ve x_i\}_{i=1}^N$
    and $\{\ve y_i\}_{i=1}^N$ be two samples 
    of $N$  points chosen 
    independently with  distribution $\mu$. 
    Then 
    \begin{equation}
      \limsup_{N\rightarrow\infty}
      \frac{2 \pi}{\log N}{\mathbb{E}_\mu[C_N]}
      \leq 1\label{estimate1} \end{equation}
    that is equivalent to 
    \begin{equation}\limsup_{N\rightarrow\infty}
      \frac{2 \pi N}{\log N}
      {\mathbb{E}_\mu[W_2^2(X^N,Y^N)]}\leq 1\label{estimate1bis}
    \end{equation}
  \item
    Moreover
    \begin{equation}
      \limsup_{N\rightarrow\infty}\frac{4 \pi N}{\log N}
             {\mathbb{E}_\mu[W_2^2(X^N,\mu)]}\leq 1
             \label{estimate2} \end{equation}
  \end{enumerate}
\end{theorem}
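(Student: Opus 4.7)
The plan is to reduce the non-uniform problem to the uniform case treated in \cite{AST} via the partitioning strategy that already motivated the conjectures in Section~\ref{sez:due}. Partition $Q$ into a grid of $m^2$ sub-squares of side $1/m$, with $m = m(N)\to\infty$ growing slowly (for instance $m = (\log N)^{1/4}$), so that on each cell $Q_k$ the Lipschitz density $\rho$ differs from its average $\rho_k := p_k/|Q_k|$ (with $p_k := \int_{Q_k}\rho$) by $O(1/m)$, while each cell still contains $\sim N/m^2$ points. On a cell, the conditional distribution of the sample given that it falls in $Q_k$ can be coupled to the uniform distribution on $Q_k$ by a deterministic transport of cost $O(1/m^4)$ in $W_2^2$, so that up to such a bias each cell behaves as an AST problem rescaled from the unit square.

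For part (1) I would construct a suboptimal matching in two stages. \emph{Balancing:} transport mass between cells so that each $Q_k$ contains exactly $N_k := \lfloor N p_k\rfloor$ points of each type; since the counts $R_k, S_k$ are binomials with mean $N p_k$ and variance $O(N p_k)$, the cost of rebalancing neighbouring cells is controlled by $\frac{1}{m^2}\sum_k (|R_k - N p_k| + |S_k - N p_k|)$. \emph{Local matching:} within each cell match the $N_k$ $x$-points to the $N_k$ $y$-points; after rescaling $Q_k$ to the unit square and invoking the density-to-uniform comparison above, AST gives
\[
\mathbb{E}[C^{(k)}_{N_k}] \leq (1+o(1))\,|Q_k|\,\frac{\log N_k}{2\pi}.
\]
Summing over $k$, using $\sum_k |Q_k|=1$ and $\log N_k = \log N - 2\log m + O(1) = (1+o(1))\log N$, yields the leading term $(1+o(1))\log N/(2\pi)$ as required for \eqref{estimate1}.

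For part (2) the strategy is analogous but simpler: for each cell containing $R_k$ empirical points, I would construct a coupling sending those points to the restriction of $\mu$ to $Q_k$, after a preliminary adjustment moving a mass of order $|R_k/N - p_k|$ between neighbouring cells. Within each cell the semi-discrete AST bound contributes $(1+o(1))|Q_k|\log R_k/(4\pi N)$ to $W_2^2$, summing to $(1+o(1))\log N/(4\pi N)$, while the inter-cell balancing is $o(\log N/N)$.

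The main obstacle lies in making the two sources of error — density non-constancy on cells and count fluctuations requiring rebalancing — simultaneously $o(\log N)$ in the cost. The naive bound for the inter-cell balancing in stage one gives an expected contribution $O(\sqrt{N}/m)$, which requires $m\gg \sqrt{N}/\log N$, in tension with the $m = N^{o(1)}$ needed to keep $\log N_k = (1+o(1))\log N$ after the local AST step. Resolving this tension is the delicate point of the proof and likely requires either a dyadic multiscale refinement in the spirit of \cite{AKT} applied only to the cross-cell residual, or a direct PDE-based construction as in Section~\ref{sez:3nuova}, where the weighted Green function (with singular part $-\log|\ve x-\ve z|/(2\pi\rho(\ve z))$) pairs against $\rho(\ve z)\de\ve z$ in the expectation and produces the constant $1/(2\pi)$ independently of $\rho$, at the price of a careful justification of the linearisation and of the choice of cut-off at scale $1/\sqrt{N}$.
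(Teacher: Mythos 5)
Your proposal contains a genuine gap, which you yourself flag but do not close: the inter-cell rebalancing in the bipartite construction costs $O(\sqrt N/m)$ in expectation, which is incompatible with the requirement $m=N^{o(1)}$ needed to keep $\log N_k=(1+o(1))\log N$. The paper resolves this by never performing a bipartite rebalancing at all. It proves part (2) first and then deduces part (1) from it via the doubling argument of \cite[Prop.~2.1]{AST}: writing $W_2^2(X^N,Y^N)\le\int|\ve T_{X^N}(\ve x)-\ve T_{Y^N}(\ve x)|^2\rho(\ve x)\de\ve x$, expanding the square and using that $\ve T_{X^N}$ and $\ve T_{Y^N}$ are i.i.d., one gets $\mathbb E_\mu[W_2^2(X^N,Y^N)]\le 2\,\mathbb E_\mu[W_2^2(X^N,\mu)]$, so \eqref{estimate1} follows from \eqref{estimate2} with the correct factor. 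This is the missing idea in your part (1).

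For part (2) the paper also sidesteps your count-fluctuation problem by a different choice of intermediate measure and a different order of limits. The grid parameter $m$ is kept \emph{fixed} as $N\to\infty$ (and sent to infinity only at the very end, together with a $\delta\to0$ from the weighted triangle inequality). The intermediate measure $\mu^m$ is defined with the \emph{empirical} weights $R_k/N$ on each cell, so each local semidiscrete problem is exactly balanced by construction and the local AST bound applies directly after the multiplicative $(1+c/m)$ comparison of Lemma \ref{lemma:ell} (a Knothe-type bi-Lipschitz map; note that your additive $O(1/m^4)$ per-cell comparison would reintroduce the same tension through the cross terms of the triangle inequality). The residual discrepancy $W_2^2(\mu^m,\mu)$ is then controlled not by moving excess mass between neighbouring cells but by the Benamou--Brenier/$\dot H^{-1}$ comparison of Lemma \ref{lemma:wl2}, giving $\mathbb E_\mu[W_2^2(\mu^m,\mu)]\le c\,\mathbb E\|\mu^m-\mu\|_2^2=O(m^2/N)$, which after multiplication by $N/\log N$ vanishes for fixed $m$. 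Your fallback suggestions (dyadic multiscale refinement, or making the formal PDE argument of Section \ref{sez:3nuova} rigorous) are not what the paper does and would each require substantial additional work.
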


We first prove the second part of the theorem,
and then we show that \eqref{estimate2} implies \eqref{estimate1bis}.

\vskip.3cm
The idea of the proof is to  divide the square $Q$ into small squares
where the density can be considered constant in order to apply
the result in eq. \eqref{fact2}.
More precisely, we state the following Lemma.

\begin{lemma}
  \label{lemma:ell}
  Let $\rho(\ve x)$ be a strictly positive and Lipschitz continuous function
  defined  in 
  $Q^\ell = [0,\ell]^2$,
  let $\nu(\de \ve x) = r(\ve x) \de x$ be the probability
  measure of density $r(\ve x) = \rho(\ve x) / \int_{Q^\ell}\rho$,
  and let $\sigma^\ell(\de \ve x) = \ell^{-2} \de \ve x$
  be the uniform probability measure on $Q^\ell$.
  Let  us denote with  $\{\ve x_i\}_{i=1}^R$ a sample of $R$
  points independently distributed with $\nu$, 
  and with  $\{\ve z_i\}_{i=1}^R$ a sample of  $N$
  points independently distributed with the uniform probability measure
  $\sigma^\ell$, and let us indicate with
  $X^R(\de \ve x)$ and $Z^R(\de \ve x)$
  the corresponding empirical measures.
  
  Then there exists a constant
  $c>0$ such that for sufficiently small $\ell$
  \begin{equation*}
    \mathbb E_{\nu} W_2^2(X^R,\nu) \le (1+c\ell)
    \mathbb E_{\sigma^\ell} W_2^2(Z^R,\sigma^\ell).
  \end{equation*}
\end{lemma}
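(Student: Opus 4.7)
The plan is to exhibit a single Lipschitz map $T\colon Q^\ell\to Q^\ell$ with $T_\#\sigma^\ell=\nu$ and $\mathrm{Lip}(T)\le 1+c_0\ell$ for small $\ell$, and then invoke the contraction of $W_2$ under push-forward by a Lipschitz map. Once such a $T$ is in hand, the lemma is immediate: if $\{\ve z_i\}_{i=1}^R$ are i.i.d.\ $\sigma^\ell$-distributed, then the points $\ve x_i:=T(\ve z_i)$ are automatically i.i.d.\ $\nu$-distributed, so at the level of empirical measures $X^R$ and $T_\#Z^R$ have the same law, and $\nu=T_\#\sigma^\ell$. The standard bound
\begin{equation*}
W_2^2(T_\#\mu_1,T_\#\mu_2)\le \mathrm{Lip}(T)^2\,W_2^2(\mu_1,\mu_2),
\end{equation*}
obtained by pushing any optimal coupling of $\mu_1,\mu_2$ forward through $T\otimes T$, then gives
\begin{equation*}
\mathbb{E}_\nu W_2^2(X^R,\nu)=\mathbb{E}_{\sigma^\ell}W_2^2(T_\#Z^R,T_\#\sigma^\ell)\le(1+c_0\ell)^2\,\mathbb{E}_{\sigma^\ell}W_2^2(Z^R,\sigma^\ell),
\end{equation*}
which is the claim with any $c>2c_0$ for $\ell$ small.

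To construct $T$ with the required Lipschitz bound, I rescale to the unit square and use the Knothe--Rosenblatt rearrangement. Set $\tilde{\ve x}=\ve x/\ell$ and $\tilde r(\tilde{\ve x})=\ell^2 r(\ell\tilde{\ve x})=\rho(\ell\tilde{\ve x})/\bar\rho$, where $\bar\rho:=\ell^{-2}\int_{Q^\ell}\rho$; then $\tilde r$ is a probability density on $[0,1]^2$ satisfying $\|\tilde r-1\|_\infty=O(\ell)$ and $\|\nabla\tilde r\|_\infty=O(\ell)$, with constants depending only on $\mathrm{Lip}(\rho)$ and $\min\rho$. Let $\tilde T(\tilde z_1,\tilde z_2)=(\tilde T_1(\tilde z_1),\tilde T_2(\tilde z_1,\tilde z_2))$ be the Knothe map from the uniform measure on $[0,1]^2$ to $\tilde r\,d\tilde{\ve x}$: explicitly $\tilde T_1$ is the monotone rearrangement from the uniform on $[0,1]$ to the first marginal $\tilde r_1$, and, for each $\tilde z_1$, $\tilde T_2(\tilde z_1,\cdot)$ is the monotone rearrangement from the uniform to the conditional density $\tilde r_2(\,\cdot\,|\,\tilde T_1(\tilde z_1))$. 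Its Jacobian is lower triangular, with diagonal entries $1/\tilde r_1(\tilde T_1)$ and $\tilde r_1(\tilde T_1)/\tilde r(\tilde T)$; implicit differentiation of the defining identity $\int_0^{\tilde T_2}\tilde r_2(s\,|\,\tilde T_1(\tilde z_1))\,ds=\tilde z_2$ expresses the off-diagonal entry $\partial_{\tilde z_1}\tilde T_2$ in terms of $\partial_{\tilde x_1}\tilde r_2$ and $\tilde T_1'$. The bounds on $\tilde r$ and $\nabla\tilde r$ make all entries of $D\tilde T-\mathrm{Id}$ of size $O(\ell)$, so $\mathrm{Lip}(\tilde T)\le 1+c_0\ell$. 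Setting $T(\ve x):=\ell\,\tilde T(\ve x/\ell)$ then yields a map $Q^\ell\to Q^\ell$ with $T_\#\sigma^\ell=\nu$ and the same Lipschitz constant (the Jacobian is invariant under this rescaling).

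The genuinely delicate step is the off-diagonal bound $\partial_{\tilde z_1}\tilde T_2=O(\ell)$: it quantifies how the conditional density $\tilde r_2(\,\cdot\,|\,\tilde x_1)$ shifts with $\tilde x_1$, and it is precisely here that the Lipschitz continuity of $\rho$ enters, through $\|\partial_{\tilde x_1}\tilde r_2\|_\infty=O(\ell)$, itself a direct consequence of the bounds on $\tilde r$, $\tilde r_1$ and their first derivatives together with the uniform lower bound on $\tilde r_1$. The remaining ingredients---existence and regularity of the Knothe map for a strictly positive $C^{0,1}$ density, the contraction inequality for $W_2$ under Lipschitz push-forward, and the fact that the coupling $\ve x_i=T(\ve z_i)$ preserves the i.i.d.\ structure---are entirely standard. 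Alternative constructions such as the Brenier optimal transport map, or a perturbative flow realizing a continuous interpolation between $\sigma^\ell$ and $\nu$, would yield the same estimate; Knothe is preferred here because its derivatives are explicit, so the $O(\ell)$ bound can be verified by a direct calculation.
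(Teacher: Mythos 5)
Your proposal is correct and follows essentially the same route as the paper: the paper also builds the Knothe--Rosenblatt (triangular) rearrangement between $\sigma^\ell$ and $\nu$, shows it is bi-Lipschitz with constant $1+O(\ell)$ using the $O(\ell)$ deviation of the rescaled density from $1$, and transfers both the sample and the reference measure through this map to compare the two expected costs. The only cosmetic differences are the direction in which the map is written (the paper constructs $\ve G$ with $\ve G_\#\nu=\sigma^\ell$ and inverts it, while you construct $T=\ve G^{-1}$ directly) and that the paper verifies the Lipschitz bound via finite differences rather than via the Jacobian.
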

\begin{proof}
  Let us denote with $L$ the Lipschitz constant of $\rho$,
  and with $a$ a constant such that $\rho(\ve x)\ge a > 0$.
  The measure $\nu$ is approximated by $\sigma^\ell$ in the
  sense that
  $$\left|r(\ve x) - \frac 1{\ell^2}\right| =
  \frac 1{\int_{Q^\ell} \rho}
  \left| \rho(\ve x) -\frac 1{\ell^2}\int_{Q^\ell} \rho\ 
  \right| \le \frac L{a\ell}.$$
  Moreover,
  $$|r(\ve x) - r(\ve y)| \le \frac {L}{a\ell^2}.$$
  Let us define
  $$r_2(x_2)  = \int_0^{\ell} r(x_1',x_2)\de x_1'$$
  and note that
  $$\left|r_2(x_2) - \frac 1{\ell}\right| \le \frac La.$$
  We consider the map
  $$
  \left\{
  \begin{aligned}
    & G_1(x_1,x_2 ) = \ell \frac 1{r_2(x_2)}
    \int_0^{x_1} r(x_1',x_2) \de x_1'\\
    & G_2(x_1,x_2) = \ell \int_0^{x_2} r_2(x_2') \de x_2'.
  \end{aligned}
  \right.
  $$
  The map $\ve x=(x_1,x_2) \to \ve G = (G_1,G_2)$ is continuously
  differentiable, 
  its Jacobian is $r(\ve x)$, and it is bijective 
  from $Q^\ell$ in  $Q^\ell$.
  Then, if $\ve x$ is uniformly distributed on $Q^\ell$, $\ve G(\ve x)$
  is distributed with density $r$. The inverse map 
  $\ve \Gamma$ of $\ve G$ transports the uniform
  distribution $\sigma^\ell(\de \ve x)$ in the probability measure
  $\nu(\de \ve x)$ of density $r$.
  By definition of $\ve \Gamma$
  $$W_2^2(X^R,\nu) =
  \inf_{J} \int J(\de \ve x,\de \ve y)
  |\ve \Gamma(\ve x) - \ve \Gamma(\ve y)|^2,
  $$
  where the infimum in taken on the joint probability measures of
  $Z^n(\de \ve x)$ and $\sigma^\ell
  (\de \ve y)$, with $\ve z_i=\ve G(\ve x_i)$.
  Now we show that
  $$
  \begin{aligned}
  |\ve \Gamma(\ve  x) - \ve \Gamma(\ve y)|^2
  &\le |\ve x-\ve y|^2 \sup_{\ve x\neq \ve y}
  \frac {|\ve \Gamma(\ve x) - \ve \Gamma(\ve y)|^2}{|\ve x - \ve y|^2}\\
  &= |\ve x-\ve y|^2
  \sup_{\ve x\neq \ve y}
    \frac {|\ve x - \ve y|^2}{|\ve G(\ve x) - \ve G(\ve y)|^2} \le
    (1 + c\ell)|\ve x - \ve y|^2
  \end{aligned}
  $$
  from which the proof follows immediately.
  Let us define
  $$
  \begin{aligned}
    &\alpha
    =\frac \ell{x_2 -y_2}
    \int_{[x_2.y_2]} r_2(x_2')\de x_2',\\
    &\beta=
    \frac {\ell}{x_1 - y_1}\int_{[x_1.y_1]} \frac {r(x_1',x_2)}{r_2(x_2)}
      \de  x_1',\\
    &\gamma=
    \frac \ell{x_2 - y_2}  \int_0^{y_1} \left(\frac {r(x_1',x_2)}{r_2(x_2)} -
      \frac {r(x_1',y_2)}{r_2(y_2)}\right) \de x_1'.
  \end{aligned}
  $$
  Using the estimate on $r-1/\ell^2$, $r_2-1/\ell$ and
  on the Lipschitz constant of $r$, we have
  $$
  \begin{aligned}
    &|\alpha -1 | \le c \ell, &
    &|\beta -1 | \le c \ell, &
    &|\gamma| \le c\ell.
  \end{aligned}
  $$
  Then
  \begin{equation*}
    \begin{aligned}
    |\ve G(\ve x) - \ve G(\ve y)|^2 &=
    (\alpha^2 + \gamma^2) (x_2 - y_2)^2 + \beta^2 (x_1 - y_1)^2
    + 2 \beta \gamma (x_1 - y_1) (x_2-y_2) \\
    &\ge (1-c\ell) |\ve x - \ve y|^2,
    \end{aligned}
  \end{equation*}
  for a suitable constant $c$ and $\ell$ sufficiently small.
\end{proof}

\vskip.3cm
We will also need to 
bound the $2-$Wasserstein distance
between two slightly different and positive densities on the square.
We can do this with the following Lemma, 
which is a corollary  of Benamou-Brenier formula \cite{BB}.

\begin{lemma}
  \label{lemma:wl2}
  If $\nu_1$ and $\nu_2$  are two
  probability measures on a convex domain $\Lambda$, 
  absolutely continuous with respect to the Lebesgue measure,
  with densities bounded from below and from above by finite  
  non-zero constants, then
  $$W_2^2(\nu_1,\nu_2) \le c \|\nu_1 - \nu_2\|_2^2.$$
\end{lemma}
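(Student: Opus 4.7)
The plan is to apply the Benamou--Brenier formula to the linear interpolation between the densities. Let $r_1$ and $r_2$ denote the densities of $\nu_1$ and $\nu_2$, bounded below by $a>0$ and above by $A<\infty$, and set
$$
r_t(\ve x) = (1-t) r_1(\ve x) + t \, r_2(\ve x), \qquad t\in[0,1].
$$
The convex combination satisfies $a\le r_t \le A$ uniformly in $t$, and $\partial_t r_t = r_2 - r_1$. By Benamou--Brenier \cite{BB}, $W_2^2(\nu_1,\nu_2)$ is bounded above by the action $\int_0^1\!\int_\Lambda r_t |v_t|^2\, \de \ve x\,\de t$ of any velocity field $v_t$ satisfying the continuity equation $\partial_t r_t + \nabla\cdot (r_t v_t)=0$ with no-flux boundary condition on $\partial \Lambda$.

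To construct a candidate $v_t$, I would choose it as a gradient independent of $t$: let $\psi$ be the zero-mean solution of the Neumann problem
$$
\lap \psi = r_1 - r_2 \ \ \text{in } \Lambda, \qquad \pa_\nu \psi = 0 \ \ \text{on } \pa\Lambda,
$$
which is solvable since $\int_\Lambda (r_1 - r_2) = 0$. Setting $v_t = \nabla \psi / r_t$, one checks
$$
\pa_t r_t + \nabla\cdot(r_t v_t) = (r_2 - r_1) + \lap \psi = 0,
$$
and $v_t \cdot \nu = 0$ on $\pa\Lambda$ by construction. Plugging this field into the Benamou--Brenier functional and using $r_t \ge a$ gives
$$
W_2^2(\nu_1,\nu_2) \le \int_0^1\!\!\int_\Lambda \frac{|\nabla \psi|^2}{r_t}\, \de \ve x\,\de t \le \frac{1}{a} \int_\Lambda |\nabla \psi|^2\, \de \ve x.
$$

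It remains to bound $\|\nabla \psi\|_2$ by $\|r_1 - r_2\|_2$, which is the standard elliptic estimate for the Neumann Laplacian. Integrating by parts and then invoking the Poincaré--Wirtinger inequality on the convex (hence Lipschitz, with a bounded Poincaré constant) domain $\Lambda$, we have
$$
\int_\Lambda |\nabla \psi|^2 = -\int_\Lambda \psi\,(r_1-r_2) \le \|\psi\|_2\,\|r_1-r_2\|_2 \le C_P \|\nabla \psi\|_2 \, \|r_1-r_2\|_2,
$$
so that $\|\nabla \psi\|_2 \le C_P \|r_1 - r_2\|_2$ and hence $W_2^2(\nu_1,\nu_2) \le (C_P^2/a)\|\nu_1-\nu_2\|_2^2$. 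No step is really hard; the only mild technical point is ensuring that the chosen $v_t$ is admissible in Benamou--Brenier (continuity equation in the weak sense, correct no-flux condition, finite action), which is immediate because $\psi \in H^1(\Lambda)$ and $r_t$ is bounded away from $0$ and $\infty$ uniformly in $t$.
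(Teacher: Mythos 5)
Your proof is correct and follows essentially the same route as the paper: the paper invokes Theorem 5.34 of Santambrogio (the $W_2$ versus $\dot H^{-1}$ comparison, itself proved by exactly the Benamou--Brenier interpolation you write out) and then bounds the $\dot H^{-1}$ norm by the $L^2$ norm via the same elliptic/Poincar\'e estimate. You have merely unpacked the cited theorem into a self-contained argument, which is fine.
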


\begin{proof}
  The Benamou-Brenier formula allows to estimate the
  $2-$Wasserstein distance between two measures in terms of
  the $\dot {\mathbb H}^{-1}$ norm of their difference. 
  More precisely, Theorem 5.34 in \cite{Sant} says:
  if $\nu_1$ and $\nu_2$ are two absolutely continuous
  measures defined on a convex domain 
  $\Lambda,$ with densities bounded from below and from above by
  the constants $a$ and $b$ respectively, $0<a<b$, then 
  \begin{equation*}
    \frac{1}{\sqrt{b}}
    \| \nu_1 - \nu_2 \|_{\dot{H}^{-1}(\Lambda)}\leq
    W_2(\nu_1,\nu_2) \leq \frac{1}{\sqrt{a}}
    \| \nu_1 - \nu_2 \|_{\dot{H}^{-1}(\Lambda)},
\end{equation*}
where the ${\dot H}^{-1}$ norm of a $0-$average charge distribution $\nu$ 
is defined by 
$$\| \nu \|_{\dot{H}^{-1}(\Lambda)}=
\int_{\Lambda}|\nabla
\lap^{-1}\nu|^2,$$
where the inverse of Laplacian is defined with Neumann
homogeneous 
boundary conditions on $\pa\Lambda$.
Therefore, by noticing that the $\dot{H}^{-1}$  norm 
is bounded from above by a positive constant
depending only on $|\Lambda|$ times the $L_2$ norm,
we get the result.
\end{proof}
We remark that more general results,
including the case of non-convex domains,
can be found in \cite{Pey} and
references therein. We also remark that this Lemma fails
if the supports of the measures are not connected,
according to remark \ref{rem:delta} at the end of the previous
section.

\vskip.3cm
Now we can start to prove Theorem
\ref{teo:stima}.
Let $m$ a positive integer and 
let us cover $Q=[0,1]^2$ 
with the $m^2$ squares $\{Q_k\}_{k=1}^{m^2}$,
of sides $1/m$ and of measure $1/m^2$,
given by 
$[i/m,(i+1)/m]\times[j/m,(j+1)/m]$, with $i,j = 0,\dots m-1$.
as in fig.
\ref{fig:moltiquadrati}.

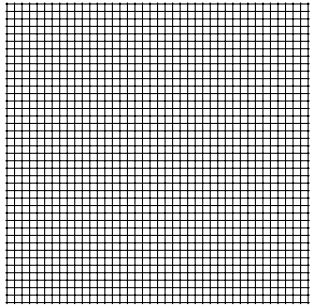
\begin{figure}[hbt!]
\centering
\begin{tikzpicture}
\draw[step=0.099cm] (-2,-2) grid (2,2);
\end{tikzpicture}
\caption{Regular grid of squares.}
\label{fig:moltiquadrati}
\end{figure}

We define:
\begin{align}
  &\sigma_k(\de \ve x) = m^2 \de \ve x \text{ the 
    uniform probability measure on }Q_k\\
  \label{eq:pk}
  &p_k = \int_{Q_k} \rho(\ve x) \de \ve x \text{ 
  the probability that $\ve x$, extracted with $\mu$,
    belongs to $Q_k$}\\
    \label{eq:mukm}
  &\mu_k^m(\de \ve x) = \frac 1{p_k} \rho(\ve x) \de \ve x
    \text{ the distribution of $\ve x$, conditioned to }\ve x \in Q_k.
\end{align}
Let $\{\ve x_i\}_{i=1}^N$ be
a sample of $N$ independent points distributed with $\mu$,
and let us denote with $R_k$ the number of points $\ve x_i$
in the square $Q_k$.
Let $J_k(\de \ve x,\de \ve y)$ a joint probability distribution
on $Q_k\times Q_k$
with marginals given by
$$
\begin{aligned}
  &\int_{Q_k} J_k(\de \ve x, \cdot ) = X^N_k(\de \ve x)
  := \frac 1R_k \sum_{j=1}^N\fcr\{\ve x_j \in Q_k\}
  \delta_{\ve x_j}(\ve x) \de \ve x\\
  &\int_{Q_k} J_k(\cdot, \de \ve y ) = \mu_k^m(\de \ve y).
\end{aligned}
$$
Then
$$J(\de \ve x,\de \ve y) = \sum_{k=1}^{m^2} \frac {R_k}N
J_k(\de \ve x,\de \ve y)$$
is a joint distribution in $Q\times Q$ with marginals given by
$$
\begin{aligned}
  &\int_Q J(\de \ve x, \cdot ) = X^N(\de \ve x)
  = \frac 1N \sum_{j=1}^N \delta_{\ve x_j}(\ve x) \de \ve x\\
  &\int_Q J(\cdot, \de \ve y ) = \mu^m(\de \ve y) :=
  \sum_{k=1}^{m^2} \frac {R_k}N \mu^m_k (\de \ve y)=
  \sum_{k=1}^{m^2} \frac {R_k}{p_kN} \rho(\ve y)
  \fcr\{\ve y\in Q_k\} \de \ve y.
\end{aligned}
$$
We will estimate $\mathbb E[W^2_2(X^N,\mu)]$ by the triangular
inequality, 
trough the estimates of $\mathbb E[W^2_2(X^N,\mu^m)]$ and
$\mathbb E[W^2_2(\mu^m,\mu)]$.

\subsection*{Estimate of  $\mathbb E[W^2_2(X^N,\mu^m)]$.}
\label{sub:1}

By definition
$$W_2^2(X^N,\mu^m)\le \sum_{i=1}^{m^2} \frac {R_k}N
W_2^2(X^N_k,\mu^m_k).$$
We first take the  expected value conditioned to the variables $R_k$,
which is equivalent to  fix $\{R_k\}$ and to extract a sample of
$R_k$ particle in $Q_k$ with distribution $\mu_k^m$,
as defined in \eqref{eq:mukm}.
Then we will take the expectation in $\{R_k\}$ with respect to $\mu$,
which means to extract the multinomial variables $\{R_k\}$
with probability $p_k$, as defined in \eqref{eq:pk}:
$$
\mathbb E_\mu
\left[
W_2^2(X^N,\mu^m)|\,\{R_k\}_{k=1}^{m^2}
\right]
\le  \sum_{i=1}^{m^2}
\frac {R_k}N
\mathbb E_{\mu_k^m}
W_2^2(X^N_k,\mu^m_k).
$$
We estimate  $\mathbb E_{\mu_k^m} W_2^2(X^N_k,\mu^m_k)$
using Lemma \ref{lemma:ell}, identifying $\ell = 1/m$,
$Q^\ell$ with $Q_k$, and using the results in Eq. \eqref{fact2}:
\begin{equation}
  \begin{aligned}
  \mathbb E_{\mu_k^m} [W^2_2(X^N_k,\mu_k^m)]
  &\le (1+c/m) \mathbb E_{\sigma_k}[W^2_2(Z^{R_k},\sigma_k)]\\
  &= (1+c/m) \frac 1{m^2}
  \left( \frac { \log R_k}{4\pi R_k} + o(\log R_k/R_k)\right).
  \end{aligned}
\end{equation}
Then, multiplying for $R_k/N$ and summing on $k$
$$\mathbb E_\mu\left[W_2^2(X^N,\mu^m)|\,\{R_k\}_{k=1}^{m^2}
\right]
\le  (1+c/m) \sum_{k:R_k>0}\left( \frac 1{m^2}
  \frac 1{4\pi N} \log R_k
  + \frac 1{m^2} o( \log R_k / N)\right).$$
The expected value of $R_k$ is $N_k=p_k N$,
where $p_k$ is of order $1/m^2$. Then we need
that $N/m^2$ diverges with $N$.
For $N$ large, $R_k$ 
differs from $N_k$ of a term of order $\sqrt N/m$,
then
$$\mathbb E \left[ \sum_{k:R_k>0} \frac 1{m^2}
  o( \log R_k / N)\right] = o(\log N/N) + o(\log m/N)
= o(\log N/N). 
$$
Moreover
$$\log R_K = \log (R_k/N_k) + \log p_k + \log N \le
\log (R_k/N_k) + \log N,$$
and since $p_k\le 1$ and
since $\log $ is a convex function 
$$\mathbb E_\mu [\log R_k] \le  \mathbb E_{\mu} [\log (R_k/N_k)]
+ \log N \le
\log N.
$$
Therefore, we conclude that
$$\mathbb E_\mu[W_2^2(X^N,\mu^m)]
\le  (1+c/m) \left( \frac {\log N}{4\pi N} + o(\log N/N)\right).$$

\subsection*{Estimate of  $\mathbb E[W^2_2(\mu^m,\mu)]$.}
\label{sub:2}

Here we use Lemma \ref{lemma:wl2}:
$$
\begin{aligned}
W_2^2(\mu^m,\mu) &\le c\|\mu^m-\mu\|^2_2
= c \sum_{k=1}^{m^2}
\left( \frac {R_k}{p_kN} - 1\right)^2 \int_{Q_k} \rho(\ve x)^2
\de \ve x  \\
&= c \sum_{k=1}^{m^2} \frac 1{p_k^2N^2} (R_k - p_k N)^2
\int_{Q_k} \rho(\ve x)^2\de \ve x.\end{aligned}
$$
Taking the expectation
$$
\begin{aligned}
  \mathbb E_\mu[W_2^2(\mu^m,\mu) &\le
c \sum_{k=1}^{m^2} \frac 1{p_k^2N^2} Np_k(1-p_k)
\int_{Q_k} \rho(\ve x)^2\de \ve x  \\
& \le 
c \frac 1N \sum_{k=1}^{m^2} \frac 1{p_k}
\int_{Q_k} \rho(\ve x)^2\de \ve x  \le c \|\rho\|_{\infty}
\frac {m^2}N.
\end{aligned}
$$

\vskip.3cm
\begin{proof}[Proof of Theorem \ref{teo:stima}.]
   Using that
  $(a+b)^2 \le (1+\delta) a^2 + (1+1/\delta) b^2$ for any $\delta >0$,
  from the triangular inequality for $W_2$ 
  we have
$$
\begin{aligned}
  \frac N{\log N} \mathbb E_\mu [W^2_2(X^n,\mu)] \le &
  (1+\delta ) \frac N{\log N}\mathbb E_\mu [W^2_2(X^n,\mu^m)] \\
  &+
  (1+1/\delta)\frac N{\log N}  \mathbb E_\mu [W^2_2(\mu^m,\mu)] \\
  \le & (1 + \delta ) (1+c/m)\left(
    \frac 1{4\pi} + o(1)\right) + 
  c(1+1/\delta) \frac {m^2}{\log N}. 
\end{aligned}
$$
We achieve the proof of  eq. \eqref{estimate2} 
taking the $\limsup$ in $N$
and then passing to the limit $m\to + \infty$ and $\delta \to 0$.
\vskip.3cm

To prove estimate \eqref{estimate1} we use a nice argument introduced in
\cite[Prop. 2.1]{AST}.
For first, let us remind that  
the best coupling between an absolute
continue measure $\mu$ and $X^N$
can be represented with a
measurable map
$\ve T_{\scriptscriptstyle{X^N}}:Q\to Q$ such that
$\ve T_{\scriptscriptstyle{X^N}}$ transport $\mu(\de \ve x)$
in $X^N(\de \ve x)$, and 
$$J_T(\de \ve x,\de \ve y)=
\delta(\ve y -  \ve T_{\scriptscriptstyle{X^N}}(\ve x))\rho(\ve x) \de \ve x \, \de \ve y$$
is the joint distribution which realize the infimum in the
definition of the 2-Wasserstein distance:
$$W_2^2(\mu,X^N) = \int J_T(\de \ve x, \de \ve y) |\ve x- \ve y|^2 = 
\int |T_{\scriptscriptstyle{X^N}}(\ve x)-\ve x|^2 \rho(\ve x) \de \ve x.$$
Let $Y^N$ another empirical  measure obtained extracting $N$ particles
with  distribution $\mu$, and let $\ve T_{\scriptscriptstyle{Y^N}}$ be
the
corresponding map which
gives the best coupling.
Then, since $\ve T_{\scriptscriptstyle{X^N}}$ and $\ve T_{\scriptscriptstyle{Y^N}}$ transport $\mu$ in $X^N$ and $Y^N$
respectively,
\begin{equation*}
  \begin{split}
    W_2^2(X^N,Y^N)&
    \leq\int\,|\ve T_{\scriptscriptstyle{X^N}}(\ve x)-\ve T_{\scriptscriptstyle{Y^N}}(\ve x))|^2
    \rho(\ve x) \de \ve x\\
    &= \int
    |\ve T_{\scriptscriptstyle{X^N}}(\ve x) - \ve x - (\ve T_{\scriptscriptstyle{Y^N}}(\ve x) - \ve x)|^2
    \rho(\ve x) \de \ve x\\ 
    &=
    \int  |\ve T_{\scriptscriptstyle{X^N}}(\ve x)-\ve x|^2\rho(\ve x) \, \de \ve x+
    \int  |\ve T_{\scriptscriptstyle{Y^N}}(\ve x)-\ve x|^2\rho(\ve x) \, \de \ve x+\\
    &-2\int
    (\ve T_{\scriptscriptstyle{X^N}}(\ve x)-\ve x)\cdot
    (\ve T_{\scriptscriptstyle{Y^N}}(\ve x)-\ve x)\rho(\ve x) \de \ve x.
\end{split}
\end{equation*}
Considering that, since $X^N$ and $Y^N$ are independent and identically
distributed, 
also $\ve T_{\scriptscriptstyle{X^N}}(\ve x)$ and
$\ve T_{\scriptscriptstyle{Y^N}}(\ve x)$ are independent
and identically distributed. Then, taking the expectation,  
\begin{equation*}
\begin{split}
  \mathbb{E}_\mu[W_2^2(X^N,Y^N)]&\leq 2 \mathbb{E}_\mu
  \int |\ve T_{\scriptscriptstyle{X^N}}(\ve x)-\ve x|^2\rho(\ve x)  \de \ve x  -
  2\int |\mathbb{E}_\mu [\ve T_{\scriptscriptstyle{X^N}}(\ve x) -\ve x]|^2\rho(\ve x) \de \ve x\\
&\leq 2 \mathbb{E}_\mu
\int |\ve T_{\scriptscriptstyle{X^N}}(\ve x)-\ve x|^2\rho(\ve x) \de \ve x =
\mathbb E_{\mu} W_2^2(X^N,\mu).
\end{split}
\end{equation*}
Therefore, by \eqref{estimate2} we get \eqref{estimate1}. 
\end{proof}

\section*{Acknowledgments}

We thank Luigi Orsina and Riccardo Salvati Manni for useful
discussions and suggestions.

This work has been partially supported by the grant
``Progetti di ricerca d'Ateneo 2016''  by Sapienza University, Rome.

\end{document}